\newtheorem{theorem}{Theorem}
\newtheorem{lemma}{Lemma}
\newtheorem{remark}{Remark}
\newtheorem{proposition}{Proposition}
\newtheorem{corollary}{Corollary}
\newtheorem{example}{Example}
\begin{document}

\title{\textbf{{\Large {A finite-sum representation for solutions for the Jacobi
operator}} }}
\author{{\large {Hugo M. Campos and Vladislav V. Kravchenko}}\\{\small Departamento de Matem\'{a}ticas, CINVESTAV del IPN, Unidad
Quer\'{e}taro} \\{\small Libramiento Norponiente No. 2000, Fracc. Real de Juriquilla} \\{\small Queretaro, Qro. C.P. 76230 MEXICO } \\{\small e-mail: hcampos@math.cinvestav.mx; vkravchenko@qro.cinvestav.mx}}
\maketitle

\begin{abstract}
We obtain a finite-sum representation for the general solution of the
equation
\[
\Delta\left(  p(n-1)\Delta u(n-1)\right)  +q(n)u(n)=\lambda r(n)u(n)
\]
in terms of a nonvanishing solution corresponding to some fixed value of
$\lambda=\lambda_{0}$. Applications of this representation to some results on
the boundedness of solutions are given as well as illustrating examples.

\end{abstract}

Keywords: Jacobi operator; difference equation; spectral parameter power series

\section{Introduction}

First we introduce some notations. For $a\in\mathbb{R}$ we define\ the
following set $N_{a}=\{a,a+1,...\},$ and $\Delta$ stays for the difference
operator, $\Delta u(n)=u(n+1)-u(n)$. A sequence $v(n)$ satisfying $\Delta
v(n)=u(n)$ is called an indefinite sum of $u(n)$. The indefinite sum of a
sequence is not unique and corresponding indefinite sums differ by a constant.
By $\sum_{j=n_{0}}^{n-1}{}^{\ast}u(j)$ we denote the indefinite sum of $u(n)$
satisfying the boundary condition $u(n_{0})=0$. Then we have \cite{Teschl}:
\[
\sum_{j=n_{0}}^{n-1}{}^{\ast}u(j)=\left\{
\begin{array}
[c]{cl}%
\displaystyle\sum_{j=n_{0}}^{n-1}u(j), & n>n_{0}\\
0, & n=n_{0}\\
-\displaystyle\sum_{j=n}^{n_{0}-1}u(j), & n<n_{0}%
\end{array}
\right.  .
\]

We will consider the second order difference equation of the form
\begin{equation}
\Delta\left(  p(n-1)\Delta u(n-1)\right)  +q(n)u(n)=\lambda r(n)u(n),\hspace
{0cm}n\in N_{a}, \label{SLP}%
\end{equation}
where $r,$ $p,$ $q$ are given complex sequences defined on $N_{a}$, $N_{a-1}$
and $N_{a}$ respectively, $p(n)\neq0$ for all $n\in N_{a-1}$, $\lambda
\in\mathbb{C}$ is the spectral parameter and $u(n)$, defined on $N_{a-1}$, is
the unknown function. The operator on the left-hand side is known as the
Jacobi operator and has been extensively studied (see, e.g., \cite{Teschl}).
If in (\ref{SLP}) $\lambda=0,$ we obtain the equation
\begin{equation}
\Delta\left(  p(n-1)\Delta u(n-1)\right)  +q(n)u(n)=0 \label{SL}%
\end{equation}
which was studied in dozens of works regarding several aspects, namely,
oscillation, disconjugacy, disfocality, asymptotic behaviour, boundedness and
boundary value problem (see, e.g., the books \cite{Agarwal1}, \cite{Agarwal2},
\cite{Elaydi} and \cite{Walter}). Equation (\ref{SLP}) can be regarded as a
discrete analogue of the Sturm-Liouville differential equation
\[
(p(x)y^{\prime}(x))^{\prime}+q(x)y(x)=\lambda r(x)y(x)
\]
and quite often techniques and results developed for (\ref{SLP}) represent
discrete analogues of the corresponding continuous results for the
Sturm-Liouville equation.

In this paper we begin by obtaining a discrete version of some results from
\cite{Kravchenco2} and \cite{Kravchenco1} concerning the spectral power series
representation for the general solution of the Sturm-Liouville differential
equation. This representation being a different form of a perturbation
Liouville-Neumann series \cite{Bellman} offers an efficient algorithm for
numerical calculation of eigenfunctions and eigenvalues of a Sturm-Liouville
problem (see \cite{CKKO}, \cite{Kravchenco2}, \cite{Kravchenco1},
\cite{KhmRosu}).

For linear difference equations a spectral power series representation for
solutions was considered also as a perturbation technique, however even the
situation with the convergence of such series was not satisfactorily
understood (see, e.g., \cite[p. 91]{Agarwal1}, where the possibility of
divergence of the series as those considered in the present work is assumed).
Motivated by \cite{Kravchenco1} and \cite{Kravchenco2}, we propose a different
procedure to find the coefficients of such series for solutions of (\ref{SLP})
(see Theorem \ref{ThRepresentation}) which gives us as a simple corollary that
those series are in fact finite sums (see Lemma \ref{LemmaFinite}).

As an application of this representation, we give alternative proofs of some
results already known in the literature, concerning boundedness of solutions.
We also extend the criterion of the boundedness of all solutions of a linear
second-order difference equation onto a general case of complex coefficients,
Theorem \ref{ThBoundModule}.

\section{A finite-sum representation for solutions}

In this section we prove the main result of the present work, Theorem
\ref{ThRepresentation}, which establishes that any nonvanishing solution
$u_{0}$ of (\ref{SL}) allows us to obtain a general solution of (\ref{SLP}) as
follows. Consider the sequences
\begin{equation}
u_{1}(n)=\left\{
\begin{array}
[c]{ll}%
u_{0}(n)\displaystyle\sum_{k=0}^{n-n_{0}-1}\lambda^{k}X^{(2k)}(n), &
n>n_{0}\vspace{0.3cm}\\
u_{0}(n)\displaystyle\sum_{k=0}^{n_{0}-n}\lambda^{k}X^{(2k)}(n), & n\leq n_{0}%
\end{array}
\right.  \label{s1}%
\end{equation}
and%
\begin{equation}
u_{2}(n)=\left\{
\begin{array}
[c]{ll}%
u_{0}(n)\displaystyle\sum_{k=0}^{|n-n_{0}|-1}\lambda^{k}Y^{(2k+1)}(n), & n\neq
n_{0}\vspace{0.3cm}\\
0, & n=n_{0}\vspace{0.3cm}%
\end{array}
\right.  \label{s2}%
\end{equation}
where $X^{(i)}$ and $Y^{(i)}$ are defined recursively by the relations
\[
X^{(0)}=Y^{(0)}=1,\vspace{0cm}%
\]%
\begin{equation}
X^{(i)}(n)=\left\{
\begin{array}
[c]{ll}%
\displaystyle\sum_{s=n_{0}}^{n-1}{}^{\ast}\dfrac{X^{(i-1)}(s)}{p(s)u_{0}%
(s)u_{0}(s+1)}, & i\,\,\,\,\text{even}\vspace{0.5cm}\\
\displaystyle\sum_{s=n_{0}}^{n-1}{}^{\ast}u_{0}^{2}(s+1)X^{(i-1)}%
(s+1)r(s+1), & i\,\,\,\,\text{odd}%
\end{array}
\right.  \vspace{0cm}\label{xi}%
\end{equation}%
\begin{equation}
Y^{(i)}(n)=\left\{
\begin{array}
[c]{ll}%
\displaystyle\sum_{s=n_{0}}^{n-1}{}^{\ast}u_{0}^{2}(s+1)Y^{(i-1)}%
(s+1)r(s+1), & i\,\,\,\,\text{even}\vspace{0.5cm}\\
\displaystyle\sum_{s=n_{0}}^{n-1}{}^{\ast}\dfrac{Y^{(i-1)}(s)}{p(s)u_{0}%
(s)u_{0}(s+1)}, & i\,\,\,\,\text{odd}%
\end{array}
\right.  \vspace{0.3cm}\label{yi}%
\end{equation}
and $n_{0}\in N_{a-1}$ is an arbitrary point. We show that they are linearly
independent solutions of (\ref{SLP}). In order to prove this statement we need
first the following auxiliary result.

\begin{lemma}
\item \label{LemmaFinite}(i) For $k\geq1$,
\[
m\in\{n_{0}-k+1\,,......\,,n_{0}+k\}\Rightarrow X^{(2k)}(m)=0,
\]
and for $k\geq0$,
\[
m\in\{n_{0},n_{0}\pm1,....,n_{0}\pm k\}\Rightarrow Y^{(2k+1)}(m)=0.
\]
(ii) The sequences defined by the formulas%
\[
u_{1}(n)=u_{0}(n)\displaystyle\sum_{k=0}^{\infty}\lambda^{k}X^{(2k)}%
(n),\hspace{0cm}u_{2}(n)=u_{0}(n)\displaystyle\sum_{k=0}^{\infty}\lambda
^{k}Y^{(2k+1)}(n)
\]
can be written as follows%
\[
u_{1}(n)=\left\{
\begin{array}
[c]{ll}%
u_{0}(n)\displaystyle\sum_{k=0}^{n-n_{0}-1}\lambda^{k}X^{(2k)}(n), &
n>n_{0}\vspace{0.3cm}\\
u_{0}(n)\displaystyle\sum_{k=0}^{n_{0}-n}\lambda^{k}X^{(2k)}(n), & n\leq n_{0}%
\end{array}
\right.
\]%
\[
u_{2}(n)=\left\{
\begin{array}
[c]{ll}%
u_{0}(n)\displaystyle\sum_{k=0}^{|n-n_{0}|-1}\lambda^{k}Y^{(2k+1)}(n), & n\neq
n_{0}\vspace{0.3cm}\\
0, & n=n_{0}\vspace{0.3cm}%
\end{array}
.\right.
\]

\end{lemma}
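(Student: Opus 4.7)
I would prove part (i) by induction on $k$, but since each recursion step only connects a sequence of index $i$ to the sequence of index $i-1$, the induction must simultaneously track the vanishing of the intermediate sequences of the opposite parity. Specifically, alongside the two statements in the lemma I would prove the companion claims: for $k\ge 0$, $X^{(2k+1)}(m)=0$ whenever $m\in\{n_{0}-k,\ldots,n_{0}+k\}$, and for $k\ge 1$, $Y^{(2k)}(m)=0$ whenever $m\in\{n_{0}-k,\ldots,n_{0}+k-1\}$. The base cases follow directly from \eqref{xi}--\eqref{yi} (in particular $X^{(1)}(n_{0})=Y^{(1)}(n_{0})=0$), and the four families are then advanced together, each step depending only on the immediately preceding one.

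The inductive mechanism is the observation that a starred indefinite sum $F(n)=\sum_{s=n_{0}}^{n-1}{}^{\ast} f(s)$ equals $\sum_{s=n_{0}}^{n-1}f(s)$ for $n>n_{0}$ and $-\sum_{s=n}^{n_{0}-1}f(s)$ for $n<n_{0}$, with $F(n_{0})=0$. Consequently, if $f(s)$ vanishes on a block of consecutive integers adjacent to $n_{0}$, then $F(n)$ vanishes on a block one unit larger in the appropriate direction. To go from $X^{(2k-1)}$ to $X^{(2k)}$, for example, I would substitute the integrand $X^{(2k-1)}(s)/(p(s)u_{0}(s)u_{0}(s+1))$ into this expression and read off that all indices of $X^{(2k-1)}$ appearing in the resulting one-sided finite sums lie in its previously established vanishing set. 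The mild asymmetry between the stated interval for $X^{(2k)}$ (shifted one unit to the right of $n_{0}$) and that for $Y^{(2k)}$ (shifted one unit to the left) arises precisely from the different shifts ($s$ versus $s+1$) appearing in the two branches of \eqref{xi}--\eqref{yi}.

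Part (ii) is then an immediate corollary of (i). The condition $n_{0}-k+1\le n\le n_{0}+k$ is equivalent to $k\ge\max\{n-n_{0},\,n_{0}-n+1\}$, so $X^{(2k)}(n)$ can be nonzero only for $k\le n-n_{0}-1$ (when $n>n_{0}$) or $k\le n_{0}-n$ (when $n\le n_{0}$), which truncates $\sum_{k=0}^{\infty}\lambda^{k}X^{(2k)}(n)$ to exactly the finite sum appearing in \eqref{s1}. The same reasoning applied to $Y^{(2k+1)}$ yields the truncation in \eqref{s2}, together with $u_{2}(n_{0})=0$ since every $Y^{(2k+1)}(n_{0})=0$. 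The only real obstacle in the whole argument is bookkeeping: one must track four interlocking vanishing intervals with the noted left/right asymmetry for the even-index $X$ and $Y$, and verify that each of the four sub-inductions advances by one index at every step. There is no analytic difficulty beyond the support structure of starred sums on finite blocks.
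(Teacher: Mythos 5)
Your proposal is correct and follows essentially the same route as the paper: an induction on $k$ that alternates between the two branches of the recursions, using the fact that a starred sum whose summand vanishes on a block adjacent to $n_{0}$ vanishes on a block one unit larger, with part (ii) read off as a truncation of the series. The paper carries out the induction only for the zeros of $X^{(2k)}$ to the right of $n_{0}$ and declares the remaining cases similar, whereas you make the full bookkeeping (the left-hand intervals and the companion vanishing sets for $X^{(2k+1)}$ and $Y^{(2k)}$) explicit; this is a more complete write-up of the same argument, not a different one.
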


\begin{proof}
(i) We use the reasoning by induction to prove that if $m\in
\{n_{0}+1,...,n_{0}+k\}$ then
\begin{equation}
X^{(2k)}(m)=0,\qquad k>0.  \label{vsp}
\end{equation}
All other cases contemplated in (i) are treated in a similar way. Note that
by definition $X^{(i)}(n_{0})=0$ for all $i\neq 0$. For $k=1$ relation (\ref%
{vsp}) holds due to the equality
\begin{equation*}
X^{(2)}(n_{0}+1)=\dfrac{X^{(1)}(n_{0})}{p(n_{0})u_{0}(n_{0})u_{0}(n_{0}+1)}%
=0.
\end{equation*}%
Suppose that the assertion is true for $k$. Then by the equality
\begin{equation*}
X^{(2k+1)}(n)=\displaystyle%
\sum_{s=n_{0}}^{n-1}u_{0}^{2}(s+1)X^{(2k)}(s+1)r(s+1),\,\,\,\,\,n>n_{0},
\end{equation*}%
we conclude that $n_{0}+1,...,n_{0}+k$ are zeros of $X^{(2k+1)}$. From this
and due to the relation
\begin{equation*}
X^{(2k+2)}(n)=\displaystyle\sum_{s=n_{0}}^{n-1}\dfrac{X^{(2k+1)}(s)}{%
p(s)u_{0}(s)u_{0}(s+1)},\,\,\,\,n>n_{0}
\end{equation*}%
we obtain that $n_{0}+1,...,n_{0}+k+1$ are zeros of $X^{(2k+2)}$, and the
assertion is valid for $k+1$.
(ii) Let $n>n_{0}$. By part (i), $X^{(2k)}(n)=0$ for all $k\geq n-n_{0},$
thus the series defining $u_{1}$ is actually a sum from $k=0$ to $k=n-n_{0}-1
$. Other cases are proved similarly.
\end{proof}

\begin{theorem}
\label{ThRepresentation}Assume that $u_{0}$ is a nonvanishing solution of
(\ref{SL}). Then the sequences (\ref{s1}) and (\ref{s2}) are linearly
independent solutions of (\ref{SLP}), where $X^{(i)}$ and $Y^{(i)}$ are
defined recursively by the relations (\ref{xi}) and (\ref{yi}) and $n_{0}\in
N_{a-1}$ is an arbitrary point.
\end{theorem}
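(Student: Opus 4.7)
The plan is to reduce equation (\ref{SLP}) by the substitution $u=u_{0}v$ to a first-order P\'olya-type form and then verify that the candidate sequences $u_{1}$, $u_{2}$ solve it, exploiting the telescoping structure built into the recursions (\ref{xi}) and (\ref{yi}). First I would apply the discrete product rule twice and use $\Delta(p(n-1)\Delta u_{0}(n-1))=-q(n)u_{0}(n)$ together with the identity $u_{0}(n)-\Delta u_{0}(n-1)=u_{0}(n-1)$ to derive the factorization
\[
\Delta\!\left(p(n-1)\Delta u(n-1)\right)+q(n)u(n)=\frac{1}{u_{0}(n)}\,\Delta\!\left[p(n-1)u_{0}(n-1)u_{0}(n)\Delta v(n-1)\right],
\]
valid whenever $u=u_{0}v$. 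Hence (\ref{SLP}) is equivalent to
\[
\Delta\!\left[p(n-1)u_{0}(n-1)u_{0}(n)\Delta v(n-1)\right]=\lambda\,r(n)u_{0}^{2}(n)v(n).
\]

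Next, by Lemma \ref{LemmaFinite}(ii) the formal expressions $v_{1}(n)=\sum_{k\geq 0}\lambda^{k}X^{(2k)}(n)$ and $v_{2}(n)=\sum_{k\geq 0}\lambda^{k}Y^{(2k+1)}(n)$ are, for each fixed $n$, finite sums, so $\Delta$ acts on them termwise. The recursions (\ref{xi}) yield the telescoping identities
\[
\Delta X^{(2k)}(n-1)=\frac{X^{(2k-1)}(n-1)}{p(n-1)u_{0}(n-1)u_{0}(n)}\quad(k\geq 1),\qquad \Delta X^{(2k-1)}(n-1)=u_{0}^{2}(n)X^{(2k-2)}(n)r(n).
\]
The first collapses $p(n-1)u_{0}(n-1)u_{0}(n)\Delta v_{1}(n-1)$ to $\sum_{k\geq 1}\lambda^{k}X^{(2k-1)}(n-1)$; applying $\Delta$ once more, using the second identity, and reindexing $k\mapsto k+1$ produces exactly $\lambda r(n)u_{0}^{2}(n)v_{1}(n)$, so the reduced equation holds for $v_{1}$. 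The verification that $u_{2}=u_{0}v_{2}$ solves the reduced equation is structurally identical via (\ref{yi}).

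Finally, linear independence follows from the Casoratian at $n_{0}$. Reading off (\ref{s1}) and (\ref{s2}) one gets $u_{1}(n_{0})=u_{0}(n_{0})$, $u_{1}(n_{0}+1)=u_{0}(n_{0}+1)$, $u_{2}(n_{0})=0$, and $u_{2}(n_{0}+1)=u_{0}(n_{0}+1)Y^{(1)}(n_{0}+1)=1/[p(n_{0})u_{0}(n_{0})]$, so
\[
u_{1}(n_{0})u_{2}(n_{0}+1)-u_{1}(n_{0}+1)u_{2}(n_{0})=\frac{1}{p(n_{0})}\neq 0,
\]
which suffices since any solution of the second-order equation is determined by its values at two consecutive points.

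The hard part will be setting up the factorization in the first step correctly: the discrete product rule has two inequivalent-looking forms and the index shifts are delicate, but the coefficient $p(n-1)u_{0}(n-1)u_{0}(n)$ is forced to appear symmetrically and matches precisely the denominator in the even-index recursions of (\ref{xi}) and (\ref{yi}), making the remaining verification a routine bookkeeping exercise dictated by the recursive definitions.
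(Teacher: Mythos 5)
Your proposal is correct and follows essentially the same route as the paper: the Pólya factorization of the Jacobi operator via the nonvanishing solution $u_{0}$, termwise application of $\Delta$ to the (finite, by Lemma \ref{LemmaFinite}) sums using the telescoping identities built into (\ref{xi}) and (\ref{yi}), and linear independence via the Casoratian at $n_{0}$ equal to $1/p(n_{0})$. The only difference is that you derive the factorization explicitly, whereas the paper simply cites it as the known Pólya factorization.
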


\begin{proof}
First we prove that the sequences $u_{1}$ and $u_{2}$ defined as follows
\begin{equation}
u_{1}(n)=u_{0}(n)\displaystyle\sum_{k=0}^{\infty }\lambda ^{k}X^{(2k)}(n),%
\hspace{0cm}u_{2}(n)=u_{0}(n)\displaystyle\sum_{k=0}^{\infty }\lambda
^{k}Y^{(2k+1)}(n)  \label{sol}
\end{equation}%
satisfy equation (\ref{SLP}) and are linearly independent. As was shown in
Lemma \ref{LemmaFinite}  these infinite series are in fact the
finite sums (\ref{s1}), (\ref{s2}).  With the help of the nonvanishing solution $u_{0}(n)$
of (\ref{SL}) the Jacobi operator
\begin{equation*}
Lu(n)=\Delta \left( p(n-1)\Delta u(n-1)\right) +q(n)u(n)
\end{equation*}%
can be factorized as follows
\begin{equation*}
Lu(n)=\dfrac{1}{u_{0}(n)}\Delta \left[ p(n-1)u_{0}(n-1)u_{0}(n)\Delta \left(
\dfrac{u(n-1)}{u_{0}(n-1)}\right) \right]
\end{equation*}%
(this is the Polya factorization \cite{Agarwal2}, \cite{Walter}). Applying
the operator $L$ to $u_{1}$ we obtain
\begin{equation*}
\begin{array}{ll}
Lu_{1}(n) & =\dfrac{1}{u_{0}(n)}\Delta \left[ p(n-1)u_{0}(n-1)u_{0}(n)\Delta %
\displaystyle\sum_{k=0}^{\infty }\lambda ^{k}X^{(2k)}(n-1)\right]  \\
& =\dfrac{1}{u_{0}(n)}\Delta \displaystyle\sum_{k=1}^{\infty }\lambda
^{k}X^{(2k-1)}(n-1) \\
& =r(n)u_{0}(n)\displaystyle\sum_{k=1}^{\infty }\lambda
^{k}X^{(2k-2)}(n)=\lambda r(n)u_{1}(n).%
\end{array}%
\end{equation*}%
\qquad
The same technique can be used to prove that $u_{2}(n)$ is a solution as
well. In order to prove that $u_{1}$ and $u_{2}$ are linearly independent,
it is necessary to verify that their Casoratian is different from zero at
any point. Since $u_{2}(n_{0})=0$, the Casoratian of $u_{1}$ and $u_{2}$ at $%
n_{0}$ can be calculated,
\begin{equation*}
W(u_{1},u_{2})(n_{0})=u_{1}(n_{0})u_{2}(n_{0}+1)=\dfrac{u_{0}(n_{0})}{%
p(n)u_{0}(n_{0})}=\dfrac{1}{p(n_{0})}\neq 0.
\end{equation*}
\end{proof}

\begin{example}
Consider the equation $\Delta^{2}u(n-1)=\lambda u(n)$. In this case one can
choose $u_{0}\equiv1$ and $n_{0}=0.$ Then by (\ref{xi}) and (\ref{yi}) we
have
\[
X^{(2k)}(n)=\dfrac{(n+k-1)^{(2k)}}{2k!},\qquad Y^{(2k+1)}(n)=\dfrac
{(n+k)^{(2k+1)}}{(2k+1)!},
\]
where $n^{(k)}:=n(n-1)...(n-k+1)$.
\end{example}

\begin{remark}
Obviously Theorem \ref{ThRepresentation} can also be applied when a
nonvanishing solution of the equation
\[
\Delta\left(  p(n-1)\Delta u(n-1)\right)  +q(n)u(n)=\lambda_{0}r(n)u(n)
\]
is known. In this case the Polya factorization is applied to the operator
$\widetilde{T}=T-\lambda_{0}I$ and (\ref{SLP}) is written in the form
$\widetilde{T}u=(\lambda-\lambda_{0})u$. Then for $n>n_{0}$ the solutions
(\ref{s1}) and (\ref{s2}) are given by the following sums
\[
u_{1}(n)=u_{0}(n)\displaystyle\sum_{k=0}^{n-n_{0}-1}(\lambda-\lambda_{0}%
)^{k}X^{(2k)}(n),\hspace{0.3cm}u_{2}(n)=u_{0}(n)\displaystyle\sum
_{k=0}^{n-n_{0}-1}(\lambda-\lambda_{0})^{k}Y^{(2k+1)}(n),
\]
and for $n\leq n_{0}$ the corresponding representation of solutions is also
obtained from (\ref{s1}) and (\ref{s2}) by replacing $\lambda^{k}$ with
$(\lambda-\lambda_{0})^{k}$.
\end{remark}

\begin{remark}
When $p$ and $q$ are real sequences a nonvanishing solution always exists.
Indeed, two linearly independent real solutions $u$ and $v$ never vanish
simultaneously (because otherwise their Casoratian vanishes) thus one can
choose $u_{0}=u+iv$.
\end{remark}

\begin{example}
Let us consider the equation
\begin{equation}
\Delta\left(  n\Delta u(n-1)\right)  +\lambda u(n)=0,\hspace{0.5cm}n\in N_{1}.
\label{laguerre}%
\end{equation}
Let $n_{0}=0$ and $u_{0}\equiv1$. Using the auxiliary operator
\[
Tu(n):=\sum_{s=0}^{n-1}{}^{\ast}\dfrac{1}{1+s}\sum_{l=0}^{s-1}{}^{\ast
}u(l+1),
\]
from (\ref{xi}) and (\ref{yi}) we get $X^{(2k)}=T(X^{(2k-2)})$ and
$X^{(2k+1)}=T(X^{(2k-1)})$. Then we have the following relations
\[
X^{(2)}(n)=T(1)=\sum_{s=0}^{n-1}{}^{\ast}\frac{s}{1+s}=\sum_{s=0}^{n-1}%
{}^{\ast}1-\sum_{s=0}^{n-1}{}^{\ast}\dfrac{1}{1+s}=n-Y^{(1)}(n),
\]%
\[
X^{(4)}(n)=T(n)-T\left(  Y^{(1)}(n)\right)  =\dfrac{n^{(2)}}{4}-Y^{(3)}(n),
\]%
\[
\vdots
\]%
\begin{equation}
X^{(2k)}(n)=\dfrac{n^{(k)}}{(k!)^{2}}-Y^{(2k-1)}(n). \label{eqx}%
\end{equation}
Consider the following combination of the two solutions
\[
u(n,\lambda):=u_{1}(n)-\lambda u_{2}(n)=(-\lambda)^{n}Y^{(2n-1)}(n)+\sum
_{k=0}^{n-1}(-\lambda)^{k}\dfrac{n^{(k)}}{(k!)^{2}}.
\]
By Lemma \ref{LemmaFinite} $X^{(2n)}(n)=0$, and due to (\ref{eqx}) we have
$Y^{(2n-1)}(n)=\dfrac{n^{(n)}}{(n!)^{2}}$. Thus,
\[
u(n,\lambda)=\sum_{k=0}^{n}(-\lambda)^{k}\dfrac{n^{(k)}}{(k!)^{2}}=\sum
_{k=0}^{n}{\binom{n}{k}}\dfrac{(-\lambda)^{k}}{k!},\hspace{0.5cm}n\geq0.
\]
Note that these are the Laguerre polynomials (of the variable $\lambda$) and
as is well known they satisfy (\ref{laguerre}) (see, e.g., \cite{Elaydi}).
\end{example}

Let us distinguish the following special case of Theorem
\ref{ThRepresentation}.

\begin{corollary}
The sequences $u_{1},$ $u_{2}$ defined by (\ref{s1}), (\ref{s2}), (\ref{xi})
and (\ref{yi}) with $u_{0}\equiv\lambda=1$ are linearly independent solutions
of the equation
\begin{equation}
\Delta\left(  p(n-1)\Delta u(n-1)\right)  =r(n)u(n),\hspace{0.5cm}n\in N_{a}.
\label{SL1}%
\end{equation}

\end{corollary}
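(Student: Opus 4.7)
The plan is to derive the corollary as a direct specialization of Theorem \ref{ThRepresentation}. The key observation is that equation (\ref{SL1}) is exactly (\ref{SLP}) in the particular case where $q\equiv 0$ and $\lambda = 1$. So the strategy is to exhibit a nonvanishing solution $u_0$ of the associated homogeneous equation (\ref{SL}) with $q\equiv 0$, namely
\[
\Delta\left(p(n-1)\Delta u(n-1)\right)=0,
\]
and then invoke Theorem \ref{ThRepresentation}.

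First I would note that the constant sequence $u_0(n)\equiv 1$ trivially satisfies this equation, since $\Delta u_0(n-1)=0$ for every $n\in N_a$, and it is manifestly nonvanishing. Hence $u_0$ qualifies as the nonvanishing solution required by the hypothesis of Theorem \ref{ThRepresentation}.

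Next, I would apply Theorem \ref{ThRepresentation} with this choice of $u_0$ and $q\equiv 0$ to produce the pair $u_1$, $u_2$ from (\ref{s1}), (\ref{s2}), with $X^{(i)}$ and $Y^{(i)}$ defined by the recursions (\ref{xi}), (\ref{yi}); by the theorem these are linearly independent solutions of
\[
\Delta\left(p(n-1)\Delta u(n-1)\right)=\lambda r(n)u(n)
\]
for every $\lambda\in\mathbb{C}$. Finally, evaluating at the specific value $\lambda=1$ yields linearly independent solutions of (\ref{SL1}), which is precisely what the corollary asserts.

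There is essentially no substantial obstacle here: the statement is a textbook specialization, the only genuine verifications being that $u_0\equiv 1$ solves the $q\equiv 0$ case of (\ref{SL}) (immediate) and that the specialization $\lambda=1$ is consistent with both the hypothesis of Theorem \ref{ThRepresentation} (which allows arbitrary $\lambda\in\mathbb{C}$) and with the formulas (\ref{s1}), (\ref{s2}), where $\lambda^k$ simply becomes $1$. No new computation involving the recursions (\ref{xi}), (\ref{yi}) is required beyond what Theorem \ref{ThRepresentation} already provides.
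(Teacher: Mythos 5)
Your proof is correct and follows exactly the route the paper intends: the corollary is stated as a special case of Theorem \ref{ThRepresentation}, obtained by taking $q\equiv 0$ (so that $u_{0}\equiv 1$ is a nonvanishing solution of (\ref{SL})) and then setting $\lambda=1$. The paper gives no explicit proof, and your specialization argument supplies precisely the verification that is left implicit.
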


\begin{remark}
\label{RemOpT}Let $u_{1}$ and $u_{2}$ be the solutions from the above
corollary. Then for $n>n_{0}$ we have
\begin{equation}
u_{1}(n)=\displaystyle\sum_{k=0}^{n-n_{0}-1}X^{(2k)}(n),\,\,u_{2}%
(n)=\displaystyle\sum_{k=0}^{n-n_{0}-1}Y^{(2k+1)}(n).\label{sol1}%
\end{equation}
Note that using the operator defined by
\begin{equation}
Tu(n):=\displaystyle\sum_{s=n_{0}+1}^{n-1}{}\displaystyle\sum_{\tau=n_{0}%
+1}^{s}\dfrac{u(\tau)r(\tau)}{p(s)},\hspace{0.5cm}n>n_{0}+1,\label{opT}%
\end{equation}
we obtain
\begin{equation}
X^{(2k)}=T(X^{(2k-2)}),\hspace{0.3cm}Y^{(2k+1)}=T(Y^{(2k-1)}).\label{r1}%
\end{equation}

\end{remark}

\section{Applications to results on the boundedness of solutions}

We begin giving another (in our opinion, an easier) proof of an important
result obtained in \cite{Patula}.

\begin{proposition}
\label{PropBound}Let $p(n)>0$ and $r(n)\geq0$. If all solutions of (\ref{SL1})
are bounded then
\[
\displaystyle\sum_{s=a}^{\infty}\displaystyle\sum_{\tau=a}^{s}\dfrac{r(\tau
)}{p(s)}<\infty,\hspace{0.5cm}\displaystyle\sum_{s=a}^{\infty}\dfrac{1}%
{p(s)}<\infty.
\]

\end{proposition}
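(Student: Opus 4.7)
The plan is to invoke the corollary of Theorem \ref{ThRepresentation} with $u_0\equiv 1$ and $n_0:=a-1$ to produce two explicit linearly independent solutions of (\ref{SL1}) whose partial sums literally spell out the two expressions the proposition asserts to be finite; boundedness of these solutions will then transfer directly into summability.

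First I would write down the two solutions supplied by the corollary: for $n\geq a$,
\[
u_1(n)=\sum_{k=0}^{n-a}X^{(2k)}(n),\qquad u_2(n)=\sum_{k=0}^{n-a}Y^{(2k+1)}(n),
\]
with $X^{(i)}$ and $Y^{(i)}$ defined by (\ref{xi}), (\ref{yi}). The substantive step is to observe that the hypotheses $p>0$ and $r\geq 0$ propagate through these recursions: a short induction on $i$ shows that $X^{(i)}(n),Y^{(i)}(n)\geq 0$ for every $n>n_0$, because for such $n$ the starred indefinite sum coincides with the ordinary forward sum, and the integrand at each step is a ratio by $p(s)>0$ or a product against $r(\tau)\geq 0$ of already-nonnegative quantities. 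Consequently the bounds $u_1(n)\geq X^{(2)}(n)$ and $u_2(n)\geq Y^{(1)}(n)$ hold for all $n\geq a+1$, and both $X^{(2)}(n)$ and $Y^{(1)}(n)$ are nondecreasing in $n$.

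To close the argument, I would expand these two particular sequences directly from (\ref{xi})--(\ref{yi}) to obtain
\[
Y^{(1)}(n)=\sum_{s=a-1}^{n-1}\dfrac{1}{p(s)},\qquad X^{(2)}(n)=\sum_{s=a-1}^{n-1}\sum_{\tau=a}^{s}\dfrac{r(\tau)}{p(s)}.
\]
Under the hypothesis that every solution of (\ref{SL1}) is bounded, in particular $u_1$ and $u_2$ are bounded, and therefore so are the monotone nonnegative sequences $Y^{(1)}(n)$ and $X^{(2)}(n)$. Being bounded and nondecreasing, they converge as $n\to\infty$, which is exactly the claimed summability (the lone $s=a-1$ term contributes only the finite constant $1/p(a-1)$ and is harmless).

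I do not foresee a real obstacle; the only points requiring care are the bookkeeping around the starred sum convention and the choice of $n_0$, together with the inductive verification of nonnegativity of $X^{(i)}$ and $Y^{(i)}$ on $n>n_0$. The entire simplification relative to the proof in \cite{Patula} comes from the fact that the finite-sum representation already packages two explicit monotone solutions, so the implication \emph{boundedness of all solutions} $\Rightarrow$ \emph{summability} becomes little more than monotone convergence applied to the very first nontrivial terms in those sums.
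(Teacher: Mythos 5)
Your proposal is correct and follows essentially the same route as the paper: both use the finite-sum representation with $u_0\equiv 1$, observe that $p>0$ and $r\geq 0$ make all the terms $X^{(2k)}(n)$, $Y^{(2k+1)}(n)$ nonnegative, and then deduce boundedness of the monotone partial sums $X^{(2)}(n)$ and $Y^{(1)}(n)$ from the boundedness of $u_1$ and $u_2$. Your write-up merely adds some explicit bookkeeping (the induction for nonnegativity and the choice $n_0=a-1$) that the paper leaves implicit.
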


\begin{proof}
Since the sequences $p$ and $q$ are nonnegative, by definition $X^{(2k)}$ and
$Y^{(2k+1)}$ are nonnegative as well (see Remark \ref{RemOpT}), and as all
solutions of (\ref{SL1}) are bounded, then of course
\begin{equation}
X^{(2)}(n)=T(1)=\displaystyle\sum_{s=n_{0}+1}^{n-1}{}\displaystyle\sum
_{\tau=n_{0}+1}^{s}\dfrac{r(\tau)}{p(s)}\quad\text{and}\quad Y^{(1)}%
(n)=\displaystyle\sum_{s=n_{0}}^{n-1}\dfrac{1}{p(s)}\label{x2}%
\end{equation}
are bounded too.
\end{proof}

The converse of the above result was also proved in \cite{Patula}. The proof
given there works only in the case of nonnegative coefficients. We prove a
more general result.

\begin{theorem}
\label{ThBoundModule}Let $p(n)\neq0$. If
\begin{equation}
\displaystyle\sum_{s=a}^{\infty}\displaystyle\sum_{\tau=a}^{s}\dfrac
{|r(\tau)|}{|p(s)|}<\infty,\hspace{0.5cm}\displaystyle\sum_{s=a}^{\infty
}\dfrac{1}{|p(s)|}<\infty\label{c1}%
\end{equation}
then all solutions of (\ref{SL1}) are bounded.
\end{theorem}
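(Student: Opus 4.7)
The plan is to invoke the finite-sum representation from the corollary preceding this theorem with $u_{0}\equiv1$ and $\lambda=1$, so that for any $n_{0}\in N_{a-1}$ the sequences $u_{1}$ and $u_{2}$ given by (\ref{sol1}) are two linearly independent solutions of (\ref{SL1}) defined on all of $N_{a-1}$. Since every solution of (\ref{SL1}) is then a linear combination of $u_{1}$ and $u_{2}$, it suffices to prove that $u_{1}$ and $u_{2}$ are each bounded.

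The central tool is the majorizing operator
\[
\widetilde{T}v(n):=\sum_{s=n_{0}+1}^{n-1}\sum_{\tau=n_{0}+1}^{s}\dfrac{|v(\tau)|\,|r(\tau)|}{|p(s)|},
\]
which by the triangle inequality satisfies $|Tv(n)|\leq\widetilde{T}|v|(n)$, where $T$ is the operator defined in (\ref{opT}). By Remark \ref{RemOpT}, the sequences $X^{(2k)}$ and $Y^{(2k+1)}$ are obtained by iterating $T$ starting from the constant sequence $1$ and from $Y^{(1)}$ respectively, so the task reduces to controlling the sup-norms of these iterates.

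The first hypothesis in (\ref{c1}) guarantees that
\[
M_{n_{0}}:=\sum_{s=n_{0}+1}^{\infty}\sum_{\tau=n_{0}+1}^{s}\dfrac{|r(\tau)|}{|p(s)|}
\]
tends to $0$ as $n_{0}\to\infty$, so I would choose $n_{0}\in N_{a-1}$ large enough that $M_{n_{0}}<1$. Then $\widetilde{T}(1)(n)\leq M_{n_{0}}$ for all $n>n_{0}$, and a straightforward induction on $k$ gives $|X^{(2k)}(n)|\leq M_{n_{0}}^{k}$. The second hypothesis yields $|Y^{(1)}(n)|\leq P_{n_{0}}:=\sum_{s=n_{0}}^{\infty}1/|p(s)|<\infty$, and the same iteration produces $|Y^{(2k+1)}(n)|\leq P_{n_{0}}M_{n_{0}}^{k}$. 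Summing the corresponding geometric series in the representation (\ref{sol1}) delivers the uniform bounds $|u_{1}(n)|\leq(1-M_{n_{0}})^{-1}$ and $|u_{2}(n)|\leq P_{n_{0}}(1-M_{n_{0}})^{-1}$ for all $n>n_{0}$; on the finite set $\{a-1,\dots,n_{0}\}$ boundedness is automatic, so $u_{1}$ and $u_{2}$, and therefore every solution of (\ref{SL1}), are bounded on all of $N_{a-1}$.

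The main obstacle is that the pointwise estimate $|X^{(2k)}(n)|\leq\|X^{(2k-2)}\|_{\infty}\,\widetilde{T}(1)(n)$ only yields geometric, not factorial, decay in $k$, so the series $\sum_{k}|X^{(2k)}(n)|$ need not converge unless the total mass is strictly less than $1$. The freedom in the choice of $n_{0}$ granted by Theorem \ref{ThRepresentation} is precisely what lets us shift the base point far enough to the right that $M_{n_{0}}<1$, thereby converting the convergence hypothesis (\ref{c1}) into genuine geometric decay of the coefficients.
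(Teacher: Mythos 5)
Your proposal is correct and follows essentially the same route as the paper: pick the base point $n_{0}$ far enough to the right that the tail sums in (\ref{c1}) make $|X^{(2)}|$ and $|Y^{(1)}|$ small, iterate the operator $T$ from Remark \ref{RemOpT} to get geometric decay $|X^{(2k)}|,|Y^{(2k+1)}|\lesssim\delta^{k}$, and sum the resulting geometric series in (\ref{sol1}). Your write-up is in fact more detailed than the paper's (explicit majorizing operator, explicit treatment of the finite initial segment), but the underlying argument is identical.
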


\begin{proof}
Assume that the condition (\ref{c1}) is fulfilled. Then by (\ref{x2}) there exists some $n_{0}$ such that both $|Y^{1}(n)|$ and $%
|X^{(2)}(n)|$ are less than $\delta <1$ for any $n>n_{0}$. By (\ref{opT}),
for $n>n_{0}+1$ we have
\begin{equation*}
|Tu(n)|<\sup \,|X^{(2)}(n)|\cdot \sup |u(n)|.
\end{equation*}%
From this and by (\ref{r1}) we get $|Y^{(2k-1)}(x)|,|X^{(2k)}(n)|<\delta ^{k}
$. Then from (\ref{sol1}) we obtain the boundedness of solutions.
\end{proof}

It is known \cite{Patula} that minimal solutions of (\ref{SL1}) under the
conditions of Proposition \ref{PropBound} and when the condition (\ref{c1}) is
not fulfilled, tend to zero iff there exists such a solution $u$ of
(\ref{SL1}) that the sequence $p(n)\Delta u(n)$ is unbounded. The existense of
such solutions is completely described by the following proposition to which
we also give another and easier proof.

\begin{proposition}
Let $p(n)>0$ and $r(n)\geq0$. Then for every solution $u(n)$ of (\ref{SL1})
the function $\phi(n)=p(n)\Delta u(n)$ is bounded if and only if
\begin{equation}
\displaystyle\sum_{s=a}^{\infty}\displaystyle\sum_{\tau=a}^{s}\dfrac
{r(s+1)}{p(\tau)}<\infty. \label{series bounded}%
\end{equation}

\end{proposition}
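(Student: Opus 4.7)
My plan is to apply the corollary and Remark \ref{RemOpT} above (which concern precisely equation (\ref{SL1}), with $u_{0}\equiv\lambda=1$) so as to express $\phi=p\Delta u$ for the two basis solutions as explicit finite sums, and then read off both directions of the equivalence directly from those formulas. Since $\phi=p\Delta u$ depends linearly on $u$ and the solution space is two-dimensional, the condition ``$\phi$ is bounded for every solution $u$" is equivalent to the simultaneous boundedness of $\phi_{1}:=p\Delta u_{1}$ and $\phi_{2}:=p\Delta u_{2}$.

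First I compute $\phi_{1},\phi_{2}$. Using the recursions (\ref{xi})--(\ref{yi}) with $u_{0}\equiv 1$ one gets $p(n)\Delta X^{(2k)}(n)=X^{(2k-1)}(n)$ for $k\geq 1$ and $p(n)\Delta Y^{(2k+1)}(n)=Y^{(2k)}(n)$ for $k\geq 0$; applying $p(n)\Delta$ term-by-term to (\ref{sol1}) yields
\[
\phi_{1}(n)=\sum_{k\geq 0}X^{(2k+1)}(n),\qquad \phi_{2}(n)=\sum_{k\geq 0}Y^{(2k)}(n),
\]
where for each $n>n_{0}$ only finitely many summands are nonzero, by an induction entirely analogous to Lemma \ref{LemmaFinite}(i). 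For the necessary direction this is immediate: the sign hypotheses $p>0$, $r\geq 0$ force every $X^{(i)},Y^{(i)}$ to be nonnegative, so $\phi_{2}(n)\geq Y^{(2)}(n)\geq 0$; if $\phi$ is bounded for every solution then in particular $\phi_{2}$, hence $Y^{(2)}$, is bounded, and by the explicit formula
\[
Y^{(2)}(n)=\sum_{s=n_{0}}^{n-1}r(s+1)\sum_{\tau=n_{0}}^{s}\frac{1}{p(\tau)}
\]
this is exactly condition (\ref{series bounded}).

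For the sufficient direction I will mimic the contraction argument from the proof of Theorem \ref{ThBoundModule}. Introduce the operator $\widetilde{T}u(n)=\sum_{s=n_{0}}^{n-1}r(s+1)\sum_{\tau=n_{0}}^{s}u(\tau)/p(\tau)$, so that the recursions give $Y^{(2k+2)}=\widetilde{T}Y^{(2k)}$ and $X^{(2k+3)}=\widetilde{T}X^{(2k+1)}$, together with the trivial estimate $\|\widetilde{T}u\|_{\infty}\leq \|u\|_{\infty}\sup_{n}Y^{(2)}(n)$ for nonnegative $u$. Since the series in (\ref{series bounded}) converges, its tail sums tend to zero, so I can choose $n_{0}$ large enough that $C:=\sup_{n}Y^{(2)}(n)<1$; this is the main obstacle and the one step where hypothesis (\ref{series bounded}) is genuinely used, but the freedom in the choice of base point makes it available. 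A geometric bound then yields $\|Y^{(2k)}\|_{\infty}\leq C^{k}$ and $\|X^{(2k+1)}\|_{\infty}\leq C^{k}\|X^{(1)}\|_{\infty}$, noting that $\|X^{(1)}\|_{\infty}\leq\sum_{s\geq n_{0}}r(s+1)<\infty$ automatically from (\ref{series bounded}) (since $\sum_{\tau=n_{0}}^{s}1/p(\tau)\geq 1/p(n_{0})$). Summing gives $|\phi_{2}(n)|\leq 1/(1-C)$ and $|\phi_{1}(n)|\leq \|X^{(1)}\|_{\infty}/(1-C)$, completing the sufficient direction.
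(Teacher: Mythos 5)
Your proposal is correct and follows essentially the same route as the paper: it expresses $\phi_{1,2}$ as the sums $\sum X^{(2k+1)}$ and $\sum Y^{(2k)}$, introduces the operator $\widetilde{T}$ with $X^{(2k+1)}=\widetilde{T}(X^{(2k-1)})$, $Y^{(2k+2)}=\widetilde{T}(Y^{(2k)})$, and then derives necessity by the Proposition~\ref{PropBound}-style positivity argument and sufficiency by the Theorem~\ref{ThBoundModule}-style contraction argument. The paper only sketches these last two steps (and appears to swap the words ``sufficiency'' and ``necessity'' when citing them), whereas you supply the details correctly.
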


\begin{proof}
Denote $\phi _{1,2}(n)=p(n)\Delta u_{1,2}(n)$, where $u_{1,2}$ are given by (%
\ref{sol1}). Then
\begin{equation*}
\phi _{1}(n)=\displaystyle\sum_{k=0}^{n-n_{0}-1}X^{(2k+1)}(n),\hspace{0.3cm}%
\phi _{2}(n)=\displaystyle\sum_{k=0}^{n-n_{0}}Y^{(2k)}(n).
\end{equation*}%
Furthermore, the relations $X^{(2k+1)}=\widetilde{T}(X^{(2k-1)})$ and $%
Y^{(2k+2)}=\widetilde{T}(Y^{(2k)})$ hold, where $\widetilde{T}$ is the
operator defined by
\begin{equation*}
\widetilde{T}u(n):=\displaystyle\sum_{s=n_{0}}^{n-1}\displaystyle\sum_{\tau
=n_{0}}^{s}\dfrac{r(s+1)u(\tau )}{p(\tau )}.
\end{equation*}%
Then the sufficiency of condition (\ref{series bounded}) is proved following
the reasoning from the proof of Proposition \ref{PropBound}, and the
necessity is proved following the reasoning from the proof of Theorem  \ref{ThBoundModule}.
\end{proof}

\end{document}